\newcommand{\colwiz}[1]{}
\newtheorem{propn}{Proposition}
\begin{document}

\title{Hardy's Non-locality Paradox and Possibilistic Conditions for Non-locality}

\author{Shane Mansfield}
\email[]{shane.mansfield@comlab.ox.ac.uk}
\affiliation{Oxford University Computing Laboratory, Wolfson Building, Parks Road, Oxford OX1 3QD, U.K.}

\author{Tobias Fritz}
\email[]{tobias.fritz@icfo.es}
\affiliation{ICFO - Institut de Ci\`{e}nces Fot\`{o}niques, Mediterranean Technology Park, 08860 Castelldefels (Barcelona), Spain}

\date{\today}

\begin{abstract}
Hardy's non-locality paradox is a proof without inequalities showing that certain non-local correlations violate local realism. It is `possibilistic' in the sense that one only distinguishes between possible outcomes (positive probability) and impossible outcomes (zero probability). Here we show that Hardy's paradox is quite universal: in any $(2,2,l)$ or $(2,k,2)$ Bell scenario, the occurence of Hardy's paradox is a necessary and sufficient condition for possibilistic non-locality. In particular, it subsumes all ladder paradoxes. This universality of Hardy's paradox is not true more generally: we find a new `proof without inequalities' in the $(2,3,3)$ scenario that can witness non-locality even for correlations that do not display the Hardy paradox. We discuss the ramifications of our results for the computational complexity of recognising possibilistic non-locality.
\end{abstract}

\maketitle

\section{Introduction}

Since the fundamental insight of Bell~\cite{Bell1}\cite{Bell2}, it is known that quantum mechanics cannot be completed to a local realistic theory. This is usually demonstrated by considering spatially separated systems on which certain observables can be measured. Joint measurements across the systems give rise to joint probability distributions for each global choice of observables. Under the assumptions of locality and realism, it can be shown that these correlations need to satisfy certain \emph{Bell inequalities} which can be violated quantum-mechanically, from which Bell's conclusion follows. Inferring that certain correlations are incompatible with local realism is a \emph{non-locality proof}.

A more intuitive approach to non-locality proofs has been pioneered by Greenberger, Horne, Shimony and Zeilinger~\cite{GHZ}\Citep{f20eabee9ae016a}\colwiz{Greenberger1990} as well as Hardy~\Citep{f20eaa114ac38fa}\colwiz{Hardy1993}. This kind of non-locality proof --- we will explain Hardy's `paradox' in section~\ref{hpsect} --- disregards the exact values of the joint outcome probabilities and only records which of them are non-zero and which are zero. In other words, one distinguishes only between \emph{possible outcomes} and \emph{impossible outcomes}, and this turns out to be sufficient for separating quantum mechanics from local realism. Subsequently, several other non-locality `paradoxes' of this type have been found~\cite{f20eb80456562e1}\cite{genHardy1}\cite{genHardy2}.

In this paper, we follow Abramsky~\Citep{f20ea19c93f51ad}\colwiz{Abramsky2010} in considering a general framework for such possibilistic non-locality proofs. More specifically, we study possibilistic Bell inequalities in $(n,k,l)$ Bell scenarios, where $n$ is the number of sites, $k$ is the number of allowed measurements at each site, and $l$ is the number of possible outcomes for each measurement. In this paper, we will only be concerned with scenarios for which $n=2$. What we find is a remarkable universality of Hardy's paradox: it is a necessary and sufficient condition for possibilistic non-locality in all $(2,k,2)$ and $(2,2,l)$ scenarios. However, for the $(2,3,3)$ scenario we find a new possibilistic locality condition which can be violated without the occurence of a Hardy paradox.

\section{Hardy's Non-locality Paradox}\label{hpsect}

The original Hardy paradox concerns the $(2,2,2)$ scenario~\Citep{f20eaa114ac38fa}\colwiz{Hardy1993}. Let us begin by considering this example. To make it more concrete, let's say that the two sites are Alice's lab and Bob's lab, which share a (possibly entangled) quantum state. Each experimenter can choose to make one of two measurements on their subsystem, which we call polarisation and colour. Each measurement has two possible outcomes: $\{ \uparrow, \downarrow\}$ for polarisation, and $\{R,G\}$ for colour. We assume that Alice and Bob perform very many runs of the experiment (each time starting with the same shared state) and then tabulate their results as in table \ref{hptable}. A `$1$' in the table signifies that it was possible to obtain those two outcomes in the same run, and a `$0$' signifies that this never happened. Following~\cite{f20ea19c93f51ad}, we use the term \emph{empirical model} for such a specification of possibilities. In general, any probabilistic empirical model can be transformed into a possibilistic one in a canonical way via \emph{possibilistic collapse}: the process by which all non-zero probabilities are conflated to $1$. 

\begin{table}
\caption{\label{hptable} Empirical model of the Hardy paradox. This is a possibilistic table with `$1$' standing for `possible' and `$0$' standing for `impossible'. The blank entries are unspecified and can be either $0$ or $1$.}
\begin{center}
\begin{singlespace}
\begin{tabular}{cc}
~ & Bob \\
Alice & \begin{tabular}{c|cc|cc|}
~ & $\uparrow$ & $\downarrow$ & $R$ & $G$ \\ \hline
$\uparrow$ & 1 & ~ & ~ & 0 \\
$\downarrow$ & ~ & ~ & ~ & ~ \\ \hline
$R$ & ~ & ~ & 0 & ~ \\
$G$ & 0 & ~ & ~ & ~ \\ \hline
\end{tabular}
\end{tabular}
\end{singlespace}
\end{center}
\end{table}

This partially completed table is Hardy's paradox. The apparent paradox arises because the table tells us that, when both experimenters measured polarisation, it was possible for them to both get the outcome $\uparrow$; but, when one measured polarisation and the other measured colour, it never happened that they could obtain $\uparrow$ and $G$ together. From these statements it seems that whenever $\uparrow$ was measured in one lab, the colour in the other lab must have had the value $R$; and since it was possible for both to get the outcome $\uparrow$, then it should have been possible for both to get the outcome $R$ if the experimenters had instead decided to measure colour on those runs. However, the remaining specified entry in the table tells us that it was not possible for both experimenters to measure $R$. Despite this apparent paradox, such behaviour is actually predicted by quantum mechanics.

Of course, in stating this argument, we have made some implicit assumptions. In particular, we have assumed locality (or, to be more precise, no-signalling) in that we assume that, for each run, Bob's choice of measurement did not affect Alice's outcome and vice versa. Such behaviour could give rise to faster-than-light communication between far distant labs, which is prohibited by special relativity. We have also implicitly assumed realism: that colour and polarisation had definite values even when they were not being measured. A further assumption is that every combined measurement choice has some outcome. This is equivalent to Abramsky's property of measurement locality~\Citep{f20ea19c93f51ad}\colwiz{Abramsky2010}.

We can write the condition for non-occurence of the Hardy paradox in table~\ref{hptable} as a formula in Boolean logic:
\begin{equation*}
p(\uparrow, \uparrow) \quad \to \quad p(\uparrow, G) \lor p(G, \uparrow) \lor p(R,R)\:,
\end{equation*}
where the $p(i,j) \in \{0,1\}$ are the entries of the table, or the possibility values for Alice to obtain outcome $i$ and Bob to obtain outcome $j$. 

For the $(2,2,2)$ scenario, there are $64$ versions of the Hardy paradox which one obtains from table~\ref{hptable} by permuting measurements and/or outcomes. We will continue to use the term paradox throughout this paper, though we draw attention to the fact that this is only an apparent paradox. What the `paradox' states is that models of a certain form cannot satisfy the properties of locality and realism.

\section{Properties of Empirical Models}

In any discussion of locality, realism, etc. it is important to be careful about which properties are being assumed or inferred. A detailed discussion of the properties of possibilistic empirical models is contained in \Citep{f20ea19c93f51ad}\colwiz{Abramsky2010}; and the properties of probabilistic empirical models are discussed in \Citep{f20eaa114ac3b88}\colwiz{Brandenburger2008}. We will now present some of these properties in the context of our tabular representation of $n=2$ empirical models.

We assume from the outset that all the models we deal with satisfy \emph{measurement locality (ML)}: the property that at each site the allowed measurements are independent of which measurements are made at the other sites. For $n=2$, this is equivalent to the property that if the table of a model has any zero box then that box must belong to a row (or column) of zero boxes. This allows us to omit such rows/columns of zero boxes in the tabular representation and to assume that all tables are totally defined on the domain of measurement choices.

\begin{table}
\caption{\label{locreal} Examples of possibilistic empirical models. (a) A deterministic empirical model; (b) a local realistic model; (c) a signalling model.}
\begin{singlespace}
\begin{tabular}{ccc}
\begin{tabular}{c|cc|cc|}
~ & $\uparrow$ & $\downarrow$ & $R$ & $G$ \\ \hline
$\uparrow$ & 1 & 0 & 1 & 0 \\
$\downarrow$ & 0 & 0 & 0 & 0 \\ \hline
$R$ & 1 & 0 & 1 & 0 \\
$G$ & 0 & 0 & 0 & 0 \\ \hline
\end{tabular}
\quad \quad & \quad \quad
\begin{tabular}{c|cc|cc|}
~ & $\uparrow$ & $\downarrow$ & $R$ & $G$ \\ \hline
$\uparrow$ & 1 & 0 & 1 & 0 \\
$\downarrow$ & 1 & 0 & 0 & 1 \\ \hline
$R$ & 1 & 0 & 1 & 0 \\
$G$ & 1 & 0 & 0 & 1 \\ \hline
\end{tabular}
\quad \quad & \quad \quad
\begin{tabular}{c|cc|cc|}
~ & $\uparrow$ & $\downarrow$ & $R$ & $G$ \\ \hline
$\uparrow$ & 1 & 0 & 1 & 0 \\
$\downarrow$ & 0 & 0 & 0 & 0 \\ \hline
$R$ & 0 & 1 & 1 & 0 \\
$G$ & 0 & 0 & 0 & 0 \\ \hline
\end{tabular} \\
(a) \quad \quad & \quad \quad (b) & \quad \quad (c)
\end{tabular}
\end{singlespace}
\end{table}

\emph{(Possibilistic) No-signalling ($NS$)} is the property that the choice of measurement at one site does not affect the possible outcomes at another site. In terms of the tabular representation, this means that if a sub-row has any 1 then that sub-row must have a 1 in each box, and similarly for sub-columns. For example, table~\ref{locreal} (a) and (b) are both no-signalling, while (c) is signalling. In (c), if Alice measures polarisation, then the outcome of a polarisation measurement by Bob has to be $\uparrow$, but if Alice measures colour then Bob always gets $\downarrow$. It can be shown that if an empirical model violates possibilistic no-signalling then it also violates probabilistic no-signalling. The converse does not hold in general~\Citep{f20ea19c93f51ad}\colwiz{Abramsky2010}.

\emph{(Strong) Determinism} is the property that the outcome at each site is uniquely determined by the measurement at that site. In the tabular form, this property says that each box should contain at most one 1, and that the 1s are consistent with no-signalling in that they line up in the same sub-rows/columns where possible. We call such an arrangement of 1s a \emph{deterministic grid}. Table~\ref{locreal} (a) is an example of a deterministic model. By this definition, determinism implies no-signalling.

In order to define local realistic models, we need a notion of stochastic mixtures in our possibilistic setting. We define the mixture of a model $A$ with entries $p^A_{ij}$ and a model $B$ with entries $p^B_{ij}$ to be the model with entries
\begin{equation*}
p_{ij} \equiv p^A_{ij} \lor p^B_{ij}\:,
\end{equation*}
which corresponds to the intuition that an outcome is possible in the mixture if and only if it is possible in at least one component of the mixture.

The \emph{local realistic} models are the models that can be obtained by taking arbitrary mixtures of deterministic models. In the tabular representation, a model is local realistic if and only if every $1$ in its table belongs to some deterministic grid. An example of a local realistic model is table~\ref{locreal} (b). In fact, these are precisely the empirical models that can be described with local hidden variables. The choice of the hidden variable corresponds to saying which deterministic model we're in.

We then obtain the following proposition, which facilitates the application of our results to the usual probabilistic setting:

\begin{propn}
With these definitions, possibilistic collapse takes probabilistic local realistic models to possibilistic local realistic models. Conversely, every possibilistic local realistic model can be written as the possibilistic collapse of a probabilistic one.
\end{propn}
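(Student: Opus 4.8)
The plan is to unwind the definitions on both sides and to use the elementary fact that possibilistic collapse commutes with finite mixtures: if $p_{ij} = \sum_{\lambda} q_\lambda\, p^\lambda_{ij}$ with every weight $q_\lambda > 0$, then $p_{ij} > 0$ if and only if $p^\lambda_{ij} > 0$ for some $\lambda$, so the collapse of $p$ is exactly the possibilistic mixture $\bigvee_\lambda \bar p^\lambda$ of the collapses $\bar p^\lambda$ of the components.

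For the first statement, I would take a probabilistic local realistic model $p$ and write it as a finite convex combination of deterministic probabilistic models $p^\lambda$, discarding any component of weight $0$ so that all remaining $q_\lambda > 0$. The key sublemma is that the collapse of a deterministic probabilistic model is a deterministic grid in the sense of the paper: for such a component the outcome at each site is a function of the local measurement, so in each box exactly one entry is positive, and across boxes that share a measurement at one site these positive entries lie in the same sub-row (or sub-column); collapsing turns this pattern into a deterministic grid. Combining this with the commutation fact, $\bar p$ is a possibilistic mixture of deterministic grids, hence local realistic by the tabular characterisation recalled above.

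For the converse, let $\bar p$ be a possibilistic local realistic model. By definition it is a possibilistic mixture of deterministic grids, and since the table has only finitely many entries only finitely many of these grids, say $d^1,\dots,d^m$ — one realising each $1$ occurring in $\bar p$ — are needed, so that $\bar p_{ij} = \bigvee_{\lambda=1}^{m} d^\lambda_{ij}$ with each $d^\lambda$ a deterministic grid below $\bar p$. Reading each $d^\lambda$ as a deterministic probabilistic model and forming the uniform mixture $p_{ij} \equiv \tfrac1m \sum_{\lambda=1}^{m} d^\lambda_{ij}$ yields a probabilistic local realistic model; and $p_{ij} > 0$ precisely when some $d^\lambda$ has a $1$ in position $(i,j)$, i.e.\ precisely when $\bar p_{ij} = 1$, so the collapse of $p$ is $\bar p$.

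The arguments are short, so there is no deep obstacle; the points needing care are purely bookkeeping. One must ensure that the convex decomposition in the first part is taken with strictly positive weights, or collapse would lose some possibilities; one must reduce the possibly infinite family of deterministic grids in the second part to a finite one, which is legitimate because the table is finite; and one must verify the sublemma that collapsing a deterministic probabilistic model gives a genuine deterministic grid consistent with no-signalling. It is also worth noting in passing that measurement locality guarantees every box is non-empty, so at least one deterministic grid exists and the constructions are non-vacuous.
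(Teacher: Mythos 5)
Your proof is correct and follows essentially the same route as the paper's: both parts rest on the observation that a convex combination with strictly positive weights is non-zero iff some component is non-zero, and the converse is handled by re-weighting a (finite) family of deterministic grids into a probabilistic mixture. Your version merely makes explicit some bookkeeping the paper leaves implicit (the sublemma that a deterministic probabilistic model collapses to a deterministic grid, and the reduction to finitely many grids).
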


\begin{proof}
The first statement is clear from the fact that a non-trivial convex combination of two probabilities $p^A,p^B\in[0,1]$ is non-zero precisely when at least one of $p^A$ or $p^B$ is non-zero. For the second statement, we simply write a given possibilistic local realistic model as a mixture of deterministic models and assign an arbitrary non-zero probability to each of these models such that the probabilities sum to $1$. This defines a probabilistic local realistic model with the required property.
\end{proof}

We interpret this as saying that a non-locality proof without inequalities exists for given correlations if and only if their possibilistic collapse is not local realistic in our possibilistic sense.

\section{Coarse-Grained Versions of Hardy's Paradox}\label{coarsegrain}

For $(2,2,l)$ scenarios, we consider coarse-grainings of the Hardy paradox. The basic form is the same as in the $(2,2,2)$ case (table~\ref{hptable}), but in the general case (table \ref{genpar}) we have $m_1 \times m_2$, $(l-m_1) \times 1$ and $1 \times (l-m_2)$ subtables of 0s, where $0 < m_2,m_1 < l$. Any empirical model whose table is isomorphic (up to permutations of measurements and outcomes) to table~\ref{genpar} for some particular $m_1$, $m_2$ is said to have a coarse-grained Hardy paradox. We use the notation $H_{(m_1,m_2)}$ for this property. Conditions for the non-occurrence of a paradox can still be written as a logical formula. For table \ref{genpar} the corresponding formula is
\begin{align*}
p(o_1',o_1') & \to \\  \bigvee_{r=m_1+1}^{l} & p(o_r, o_1') \lor \bigvee_{s=m_2+1}^{l} p(o_1',o_s) \lor \bigvee_{\substack{r \in [1,m_1] \\ s \in [1,m_2]}} p(o_r,o_s) \: .
\end{align*}
We use the notation $NH_{(m_1,m_2)}$ for the property that all such formulae are satisfied for a particular model.

\begin{table}
\caption{\label{genpar} A $(2,2,l)$ scenario with a $H_{(m_1,m_2)}$ coarse-grained Hardy paradox.}
\begin{center}
\begin{singlespace}
\begin{equation*}
\begin{array}{c|ccc|cc|}
~ & o_1' & \cdots & o_l' & o_1 \cdots o_{m_2} &  o_{m_2+1} \cdots o_l \\ \hline
o_1' & 1 & ~ & ~ & ~ & 0 \quad \cdots \quad 0 \\
\vdots & ~ & ~ & ~ & ~  & ~ \\
o_l' & ~ & ~ & ~ & ~  & ~ \\ \hline
\begin{array}{c}
o_1 \\ \vdots \\ o_{m_1}
\end{array}
& ~ & ~ & ~ &
\begin{array}{ccc}
0 & \cdots & 0 \\ \vdots & \ddots & \vdots \\ 0 & \cdots & 0
\end{array}
& ~ \\
\begin{array}{c}
o_{m_1+1} \\ \vdots \\ o_l
\end{array}
& 
\begin{array}{c}
0 \\ \vdots \\ 0
\end{array}
& ~ & ~ & ~ & ~ \\ \hline
\end{array}
\end{equation*}
\end{singlespace}
\end{center}
\end{table}

The coarse-graining includes the degenerate values $0$ and $l$ for $m_1$ and $m_2$. The cases $m_1=0$, $m_2=l$ and $m_1=l$, $m_2=0$ are especially interesting. For table~\ref{genpar}, these state that the first sub-column in the lower left box needs to contain some $1$, and, respectively, that the first sub-row in the upper right box needs to contain some $1$. These are the possibilistic no-signalling relations! By permutations of measurements and outcomes, these apply to any $1$ in the table; so for the no-signalling predicate we get
\begin{equation*}
NS = NH_{(0,l)} \land NH_{(l,0)} \: .
\end{equation*}

The case where $m_1=m_2=l$ simply expresses that the lower right box in table \ref{genpar} should contain at least some $1$. This is the normalisation of possibility: at least one outcome has to be possible for each choice of measurements. So given that at least some $1$ occurs somewhere in the table of a no-signalling model, the normalisation of possibility is equivalent to $NH_{(l,l)}$.

These properties and observations extend to all $(2,k,l)$ Bell scenarios by considering $2 \times 2$ subtables.

\section{Universality of Hardy's Paradox}\label{NHsect}

We now write $NH$ for the property that no coarse-grained Hardy paradox occurs.

\begin{propn}\label{equivprop}
For the $(2,2,2)$ scenario, the property of non-occurrence of any coarse-grained paradox is equivalent to possibilistic local realism:
\begin{equation}\label{genequiveq}
\textrm{NH} \quad \leftrightarrow \quad \textrm{(Local Realism)}\:.
\end{equation}
\end{propn}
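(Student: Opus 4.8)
The plan is to establish the two implications of \eqref{genequiveq} separately, with the forward direction $(\text{Local Realism})\to\text{NH}$ being routine and the converse carrying the real content. For the forward direction I would argue by contraposition: a model whose table realises a coarse-grained paradox cannot be local realistic. Suppose the table contains the pattern of table~\ref{genpar} for some $0\le m_1,m_2\le l$ (up to permuting measurements and outcomes), so that $p(o_1',o_1')=1$ with Alice and Bob both using their primed measurements. If the model were local realistic, this $1$ would lie in a (total) deterministic grid, which fixes some outcome $\alpha$ for Alice's unprimed measurement and $\beta$ for Bob's, and hence also carries $1$'s at $(o_1',\beta)$, $(\alpha,o_1')$ and $(\alpha,\beta)$ in the remaining three boxes. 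To avoid the $0$-regions of table~\ref{genpar} one would need $\beta\in\{o_1,\dots,o_{m_2}\}$ (else the $1$ at $(o_1',\beta)$ meets the $1\times(l-m_2)$ row of $0$'s) and $\alpha\in\{o_1,\dots,o_{m_1}\}$ (else the $1$ at $(\alpha,o_1')$ meets the $(l-m_1)\times 1$ column of $0$'s), but then $(\alpha,\beta)$ sits inside the $m_1\times m_2$ block of $0$'s --- a contradiction. This single comparison also covers the degenerate choices $m_1,m_2\in\{0,l\}$, which encode no-signalling and normalisation.

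For the converse $\text{NH}\to(\text{Local Realism})$ I would again reason contrapositively. Since $NS=NH_{(0,l)}\wedge NH_{(l,0)}$ is part of NH, the model is no-signalling. Assume it is not local realistic; then some $1$ belongs to no deterministic grid, and after permuting outcomes we may take it to be the entry $(0,0)$ in the box where Alice measures $X_0$ and Bob measures $Y_0$, writing $X_1,Y_1$ for the other measurements. Set $A^*:=\{a: p^{X_1Y_0}_{a0}=1\}$ and $B^*:=\{b: p^{X_0Y_1}_{0b}=1\}$; both are non-empty by no-signalling, since outcome $0$ of $X_0$ and outcome $0$ of $Y_0$ are possible. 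For any $a\in A^*$ and $b\in B^*$, the assignment $X_0\mapsto 0$, $X_1\mapsto a$, $Y_0\mapsto 0$, $Y_1\mapsto b$ describes a deterministic grid all of whose $1$'s already lie in the table except possibly $p^{X_1Y_1}_{ab}$; since our $1$ lies in no grid, $p^{X_1Y_1}_{ab}=0$ for all $a\in A^*$, $b\in B^*$.

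The last step is to see that this configuration is a Hardy paradox. Each $a\in A^*$ makes outcome $a$ of $X_1$ possible, so by no-signalling row $a$ of the $(X_1,Y_1)$ box still contains a $1$, which by the previous paragraph must sit in a column outside $B^*$; since $l=2$ this is possible only if $B^*$ is a singleton, and symmetrically for $A^*$. Writing $A^*=\{a_0\}$, $B^*=\{b_0\}$ and letting $a_1,b_1$ be the complementary outcomes, what survives is $p^{X_0Y_0}_{00}=1$, $p^{X_1Y_0}_{a_1 0}=0$, $p^{X_0Y_1}_{0 b_1}=0$ and $p^{X_1Y_1}_{a_0 b_0}=0$ --- exactly the Hardy paradox of table~\ref{hptable} up to a permutation of outcomes. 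This contradicts NH, so the model must in fact be local realistic.

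The hard part is the converse, and within it the step that forces $|A^*|=|B^*|=1$ from no-signalling together with the non-extendability of the chosen $1$: this is precisely where the hypothesis $l=2$ is used, and it is the ingredient whose failure for larger $l$ leaves room for the genuinely new $(2,3,3)$ condition mentioned in the introduction. The remaining tasks --- dealing with the degenerate coarse-grainings in the forward direction, and identifying the four surviving entries with table~\ref{hptable} under the allowed symmetries --- are bookkeeping; but since that last identification is fiddly, I would pin down the conventions for reading table~\ref{genpar} first and present the forward direction before the converse.
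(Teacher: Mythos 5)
Your proof is correct and follows essentially the same route as the paper's: use the degenerate coarse-grainings to get no-signalling, take a $1$ in no deterministic grid, let no-signalling force the candidate grid completions, and observe that the simultaneous failure of all candidates is exactly a Hardy pattern. Your $A^*$/$B^*$ packaging (and the singleton argument from $l=2$) is just a tidier rewriting of the paper's explicit walk through table~\ref{equivpf}, and your contrapositive treatment of the forward direction formalises what the paper delegates to the informal argument of section~\ref{hpsect}.
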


\begin{proof}
We have already demonstrated in section~\ref{hpsect} that an occurence of the Hardy paradox implies a violation of local realism. It only remains to prove that $NH$ implies local realism. By the observations at the end of the last section, we know in particular that $NH$ implies $NS$, so that we can freely use the latter.

From the earlier definition, a model is local realistic if and only if every $1$ in its tabular representation belongs to some deterministic grid. We begin by choosing an arbitrary $1$ in the table. Without loss of generality (w.l.o.g.) let this be the $1$ in table \ref{equivpf} (a). Then, by $NS$, the first sub-row must have a $1$ in each box, and similarly for the first sub-column. Again w.l.o.g. we let these be the entries in table \ref{equivpf} (b). If the starred entry here is a $1$, this completes the first entry to a deterministic grid and we're done. Assume that the starred entry is a $0$. Then, by no-signalling, we can fill in the $1$s in the lower right box of table~\ref{equivpf} (c). Now, if either of the starred entries in this table is a $1$ then this completes the first entry to a deterministic grid. This must be the case, for if it were not then the $0$s in these places would form a Hardy paradox together with the first entry and the $0$ in the lower right box; but we have assumed the property $NH$.
\end{proof}

\begin{table}
\caption{\label{equivpf} Stages in the proof of proposition~\ref{equivprop}.}
\begin{center}
\begin{singlespace}
\begin{tabular}{ccc}
\begin{tabular}{c|cc|cc|}
~ & ~ & ~ & ~ & ~ \\ \hline
~ & 1 & ~ & ~ & ~ \\
~ & ~ & ~ & ~ & ~ \\ \hline
~ & ~ & ~ & ~ & ~ \\
~ & ~ & ~ & ~ & ~ \\ \hline
\end{tabular}
\quad \quad & \quad \quad
\begin{tabular}{c|cc|cc|}
~ & ~ & ~ & ~ & ~ \\ \hline
~ & 1 & ~ & 1 & ~ \\
~ & ~ & ~ & ~ & ~ \\ \hline
~ & 1 & ~ & * & ~ \\
~ & ~ & ~ & ~ & ~ \\ \hline
\end{tabular}
\quad \quad & \quad \quad
\begin{tabular}{c|cc|cc|}
~ & ~ & ~ & ~ & ~ \\ \hline
~ & 1 & ~ & 1 & * \\
~ & ~ & ~ & ~ & ~ \\ \hline
~ & 1 & ~ & 0 & 1 \\
~ & * & ~ & 1 & ~ \\ \hline
\end{tabular} \\
(a) \quad \quad & \quad \quad (b) & \quad \quad (c)
\end{tabular}
\end{singlespace}
\end{center}
\end{table}

This proposition generalises easily to $(2,2,l)$ scenarios.

\begin{propn}\label{genequivprop}
For $(2,2,l)$ scenarios, the property of non-occurrence of any coarse-grained Hardy paradox is equivalent to local realism; i.e. (\ref{genequiveq}) holds for $(2,2,l)$ scenarios.
\end{propn}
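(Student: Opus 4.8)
The plan is to mimic the structure of the proof of Proposition~\ref{equivprop}, but now working with $l$ outcomes per measurement instead of $2$. As before, the implication from local realism to $NH$ is immediate: any coarse-grained Hardy paradox $H_{(m_1,m_2)}$ is, by the argument of section~\ref{hpsect} applied to the distinguished entry $p(o_1',o_1')$, incompatible with local realism, so a local realistic model can have no such paradox. We also note that $NH$ includes the degenerate instances $NH_{(0,l)}$ and $NH_{(l,0)}$, which by the observations at the end of section~\ref{coarsegrain} give us possibilistic no-signalling $NS$, and also $NH_{(l,l)}$, which together with $NS$ gives normalisation of possibility. So it remains to show that a no-signalling, normalised model with no coarse-grained Hardy paradox is local realistic, i.e. that every $1$ in its table lies in some deterministic grid.

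For this direction I would fix an arbitrary $1$ in the table, say in the $(M_A, M_B)$ box, in row $o_1'$ and column $o_1'$ (relabelling outcomes as needed). By $NS$, the whole sub-row $o_1'$ of the $(M_A, M_B')$ box contains $1$s and the whole sub-column $o_1'$ of the $(M_A', M_B)$ box contains $1$s, for the other measurements $M_A', M_B'$. The goal is to choose one $1$ in the $o_1'$-sub-row of box $(M_A,M_B')$ — say in column $o_s$ — and one $1$ in the $o_1'$-sub-column of box $(M_A',M_B)$ — say in row $o_r$ — such that the entry $p(o_r,o_s)$ in box $(M_A',M_B')$ is also a $1$; then the four $1$s at $(o_1',o_1')$, $(o_1',o_s)$, $(o_r,o_1')$, $(o_r,o_s)$ extend (using $NS$ in the remaining boxes) to a deterministic grid containing our original $1$. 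Suppose, for contradiction, that no such compatible pair of choices exists. Let $S \subseteq \{1,\dots,l\}$ index the columns $o_s$ with $p(o_1',o_s)=1$ in box $(M_A,M_B')$, and $R$ index the rows $o_r$ with $p(o_r,o_1')=1$ in box $(M_A',M_B)$; both are non-empty, and $1' \notin S$, $1' \notin R$ when the boxes differ (one should treat the coincidence $M_A'=M_A$ or $M_B'=M_B$ separately, but in $(2,2,l)$ there are only two measurements, so $M_A' \neq M_A$ and $M_B' \neq M_B$ automatically, which simplifies matters). The contradiction hypothesis says $p(o_r,o_s)=0$ for all $r\in R$, $s\in S$. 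This is exactly the $m_1 \times m_2$ block of $0$s of a coarse-grained paradox with $m_1 = |R|$, $m_2 = |S|$; the $(l-m_1)\times 1$ and $1 \times (l-m_2)$ blocks of $0$s are the complementary entries in the $o_1'$-sub-column of $(M_A',M_B)$ and the $o_1'$-sub-row of $(M_A,M_B')$, which are $0$ by the definitions of $R$ and $S$. After reindexing outcomes so that $R = \{1,\dots,m_1\}$ and $S = \{1,\dots,m_2\}$, this is precisely an instance of $H_{(m_1,m_2)}$ with distinguished entry $p(o_1',o_1')=1$, contradicting $NH$. Hence a compatible pair exists and the original $1$ lies in a deterministic grid.

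I expect the main obstacle to be bookkeeping rather than any genuinely new idea: one must check that the $0$-blocks singled out above really do match table~\ref{genpar} for some admissible $(m_1,m_2)$ with $0 < m_1,m_2 < l$ — in particular that $R$ and $S$ are proper non-empty subsets so that the degenerate cases do not arise, and that after the relabelling the $1$ at $(o_1',o_1')$ really sits in the top-left position as in table~\ref{genpar}. One subtlety worth spelling out is why the remaining (unconstrained) boxes can always be filled to make a genuine deterministic grid: this uses $NS$ to propagate the chosen $1$s consistently and normalisation to know each box is non-empty, exactly as in the $(2,2,2)$ argument. Since $(2,2,l)$ has only two measurement choices per side, there is no need to iterate the construction across many boxes, so the proof is a direct generalisation of Proposition~\ref{equivprop} with $l$-fold sub-rows and sub-columns in place of $2$-fold ones.
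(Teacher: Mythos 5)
Your proof is correct and is essentially the paper's own argument: fix an arbitrary $1$, use $NS$ to identify the sets $R$ and $S$ of compatible outcomes in the two adjacent boxes, and observe that the $R\times S$ block in the fourth box must contain a $1$ on pain of the three $0$-blocks assembling into a coarse-grained paradox $H_{(|R|,|S|)}$, which then yields the deterministic grid. The degenerate cases you flag ($R$ or $S$ equal to all of $\{1,\dots,l\}$) cannot occur under the contradiction hypothesis because they would force a full sub-column (or sub-row) of $0$s facing a $1$, contradicting $NS$, so the bookkeeping you worry about is harmless.
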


\begin{proof}
Again, it is enough to show that the left-hand side implies the right-hand side while assuming $NS$. If we take an arbitrary $1$ in the table, then w.l.o.g. we can represent the model as in table~\ref{genequivtable} by permuting measurements and outcomes if necessary. Assuming that there is no coarse-grained paradox, at least one of the starred entries must be a 1, and this completes the arbitrarily chosen 1 to a deterministic rectangle.
\end{proof}

\begin{table}
\caption{\label{genequivtable} Taking an arbitrary 1 (upper left) in the table of a no-signalling model forces the table to be of this form.}
\begin{center}
\begin{singlespace}
\begin{tabular}{c|cc|cc|}
~ & ~ & ~ & ~ & ~ \\ \hline
~ & 1 & ~ & $1 \: \cdots \: 1$ & $0 \: \cdots \: 0$ \\
~ & ~ & ~ & ~ & ~ \\ \hline
~ & 
$\begin{array}{c}
1 \\ \vdots \\ 1
\end{array}$
& ~ &
$\begin{array}{ccc}
* & \cdots & * \\
\vdots & \ddots & \vdots \\
* & \cdots & *
\end{array}$
& ~ \\
~ & 
$\begin{array}{c}
0 \\ \vdots \\ 0
\end{array}$
& ~ & ~ & ~
\\ \hline
\end{tabular}
\end{singlespace}
\end{center}
\end{table}

We can also generalise proposition~\ref{equivprop} to  $(2,k,2)$ scenarios.

\begin{propn}\label{2k2}
For $(2,k,2)$ scenarios, the property of non-occurrence of any Hardy paradox is equivalent to local realism; i.e. (\ref{genequiveq}) holds for $(2,k,2)$ scenarios.
\end{propn}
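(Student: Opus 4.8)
Since $NH$ includes the no-signalling relations obtained from $2\times2$ subtables (section~\ref{coarsegrain}), and since section~\ref{hpsect} already shows that an occurrence of a Hardy paradox is incompatible with local realism, it suffices to prove the converse: that $NH$ together with $NS$ forces every $1$ in the table to lie in a deterministic grid. Fix such a $1$; permuting measurements and relabelling outcomes, we may assume it sits in the box $(A_1,B_1)$ at position $(0,0)$, and we must produce outcome choices $a_1,\dots,a_k$ for Alice and $b_1,\dots,b_k$ for Bob with $a_1=b_1=0$ and a $1$ at position $(a_i,b_j)$ in every box $(A_i,B_j)$.

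The plan is to recognise this as a $2$-satisfiability problem and to solve it by the propagation argument already used for Proposition~\ref{equivprop}. A deterministic grid through the chosen $1$ is exactly a satisfying assignment of the $2$-CNF in the Boolean variables $a_1,\dots,a_k,b_1,\dots,b_k$ with the two unit clauses $a_1=0$, $b_1=0$ and, for each $0$-entry at position $(x,y)$ in a box $(A_i,B_j)$, the clause $\lnot(a_i=x\wedge b_j=y)$; any satisfying assignment automatically places a $1$ at $(a_i,b_j)$ in each box, since the entries are $0/1$. So it is enough to show that $NH$ makes this formula satisfiable. Starting from $a_1=b_1=0$ one propagates: by $NS$, a $0$-entry that rules out all but one outcome on one side forces that outcome on the other side, and a forced outcome turns further entries of other boxes into effective $0$-entries; I claim that under $NH$ this propagation never produces a clash, and then iterating it over any still-unforced variables yields the grid.

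The content of the proposition is precisely that a clash cannot occur. A clash, after unravelling, takes the form of a cyclic chain of forced outcomes --- say one forcing $a_i=0$ and $a_i=1$ simultaneously --- each link of which is witnessed by a $0$-entry (or, via $NS$, by a $1$-entry derived from one). Following the two forcing chains back to their last common vertex and restoring the $1$-entries that $NS$ supplies, I would single out two of Alice's measurements, two of Bob's, and outcome labels for each, so that the corresponding $2\times2$ window of boxes displays exactly the configuration of table~\ref{hptable}: one $1$ and three $0$s in the forbidden cells. This contradicts $NH$. As the simplest clash --- a cyclic chain of equality and disequality constraints with an odd number of the latter, i.e. a ladder paradox --- already shows, the $2\times2$ window witnessing the Hardy paradox need not contain the $1$ we started from and need not involve every link of the chain; this is why the statement subsumes all ladder paradoxes, and why it is peculiar to $l=2$ (the $(2,3,3)$ example of the paper shows the reduction fails for more outcomes).

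The main obstacle is exactly this extraction step: turning an arbitrarily long clash into a Hardy configuration supported on only two measurements per site. The awkwardness is that a long chain of equalities and disequalities has no short contradictory sub-chain, so the $2\times2$ window must be chosen to reuse boxes carrying only partial constraints, with the correct outcome labellings read off from the structure of the chain. I would expect to push this through either by an induction on the length of the clash that reduces it to the case already handled by Proposition~\ref{equivprop}, or by an induction on $k$ with a hypothesis strong enough to carry the propagation data when one measurement is removed.
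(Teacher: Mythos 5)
There is a genuine gap, and you have located it yourself: everything in your argument up to the final two paragraphs is a correct reformulation of the problem (a deterministic grid through the chosen $1$ is a satisfying assignment of a $2$-CNF with one clause per $0$-entry plus two unit clauses), but the proposition \emph{is} the claim that unsatisfiability of that $2$-CNF forces a Hardy paradox in some $2\times2$ window, and that claim is only asserted. Phrases such as ``I claim that under $NH$ this propagation never produces a clash'' and ``I would single out two of Alice's measurements, two of Bob's, \dots so that the corresponding $2\times2$ window displays exactly the configuration of table~\ref{hptable}'' describe the desired conclusion, not a derivation of it. Your own closing paragraph concedes that a long clash has no short contradictory sub-chain and that the witnessing window must reuse boxes and entries not appearing in the chain at all --- which is precisely why the extraction cannot be done by ``following the two forcing chains back to their last common vertex.'' As it stands the proposal is a correct problem reduction followed by a restatement of the theorem.

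For comparison, the paper closes this gap by induction on the number of settings, and --- importantly --- it does \emph{not} extract a Hardy window from a clash. After disposing of the case where some sub-row or sub-column is all $0$s (reduce to a smaller table and lift the grid back by $NS$), it assumes every box carries a diagonal or anti-diagonal of $1$s, completes the chosen $1$ to a deterministic grid on all but one of Bob's settings by the inductive hypothesis, and then, when that grid fails to extend, applies $NH$ to the $1$s in the top sub-row to deduce that certain \emph{other} entries must be $1$s; these extra $1$s are used to build a \emph{different} satisfying assignment (flip Alice's outcome exactly in the rows where the obstruction sits on the upper left). In your language: rather than showing that a clash implies a Hardy paradox, the paper shows that $NH$ injects enough additional $1$s into the table that the $2$-SAT instance is satisfiable by an explicitly rerouted assignment. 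If you want to salvage your plan, this is the shape the missing argument has to take; the ``induction on $k$'' option you mention at the end is the one that works, but it needs the case split and the rerouting step spelled out.
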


\begin{proof}
By proposition~\ref{equivprop}, we know that this holds for $k=2$, and will show by induction that it holds for all $k$. It is useful to use the tabular representation of models in what follows. In this setting, it has to be shown that every $1$ in a given table can be completed to a deterministic grid of $1$s, assuming that no Hardy paradox occurs. We will show that this property holds for all $k_1 \times k_2$ tables, i.e. for all scenarios with $k_1$ two-outcome measurements for Alice and $k_2$ two-outcome measurements for Bob, given that it holds for all $k_1 \times (k_2 - 1)$ tables and all $(k_1 - 1) \times k_2$ tables.

Firstly, we prove the inductive step in the case where some sub-row or sub-column in the $k_1 \times k_2$ table consists entirely of $0$s. Say we have an outcome sub-column of $0$s for some measurement setting of Bob. Then we pick any $1$ in the table. If this $1$ is in the same measurement setting of Bob as the sub-column of $0$s, then by no-signalling its sub-row has a $1$ in each box of the same setting for Alice. We choose any other of these $1$s, complete it to a deterministic grid in the $k_1 \times (k_2-1)$ table obtained by ignoring the particular setting of Bob. Then, by no-signalling, this must complete to a $k_1 \times k_2$ deterministic grid. If the initial $1$ is in a different measurement setting of Bob to the column of $0$s, one can similarly forget the latter setting and apply the induction assumption to the remaining $k_1 \times (k_2 - 1)$ table. Again, the resulting deterministic grid in the sub-table completes uniquely to the whole table by no-signalling. A similar argument holds for sub-rows of $0$s.

Now we need to prove the inductive step in the case where there are no sub-rows or sub-columns of $0$s. By no-signalling, this is equivalent to no individual box having a sub-row/column of $0$s. Hence we can assume that every box has a diagonal or anti-diagonal of $1$s. We choose an arbitrary $1$ in the table, which w.l.o.g. we can write in the upper left corner. By the inductive hypothesis, this can be completed to a $k_1 \times (k_2-1)$ deterministic grid, which w.l.o.g. we write in the upper left corners of all boxes up to Bob's $(k_2-1)$th setting (see table~\ref{longpftab}).

Assume that this deterministic grid does not complete to Bob's $k_2$th setting. Then there must be a $0$ in the upper right corner of some box(es) of Bob's $k_2$th setting, and a $0$ in the upper left corner of some box(es) in the same setting. In table~\ref{longpftab}, we have illustrated a representative situation, including the diagonals or anti-diagonals that these boxes must have. In order to avoid a Hardy paradox triggered by the $1$s in the top sub-row, we must have $1$s in the starred places, corresponding to all those sub-rows where the $0$ in the $k_2$th setting of Bob occurs on the upper left. But now we can find a deterministic grid including the initial $1$ for table~\ref{longpftab} by choosing the second outcome for Alice in the case of a starred row and the first outcome otherwise, while choosing the first outcome for Bob in all measurements.
\end{proof}

\begin{table}
\caption{\label{longpftab} Table for the proof of proposition~\ref{2k2}.}
\begin{center}
\begin{singlespace}
\begin{tabular}{c|c|c|c|c|}
~ & ~ & ~  & ~ & ~ \\ \hline
~ &
\begin{tabular}{cc}
1 & ~ \\ & ~
\end{tabular}
& $\quad \cdots \quad$ &
\begin{tabular}{cc}
1 & ~ \\ & ~
\end{tabular}
&
\begin{tabular}{cc}
1 & 0 \\ ~ & 1
\end{tabular}
\\ \hline
~ &
\begin{tabular}{cc}
1 & ~ \\ * & ~
\end{tabular}
& $\quad \cdots \quad$ &
\begin{tabular}{cc}
1 & ~ \\ * & ~
\end{tabular}
&
\begin{tabular}{cc}
0 & 1 \\ 1 & ~
\end{tabular}
\\ \hline
~ &
\begin{tabular}{cc}
1 & ~ \\ ~ & ~
\end{tabular}
& $\quad \cdots \quad$ &
\begin{tabular}{cc}
1 & ~ \\ ~ & ~
\end{tabular}
&
~
\\ \hline
~ & ~ & $\ddots$ & ~ & ~ 
\end{tabular}
\end{singlespace}
\end{center}
\end{table}

Hardy's ladder paradox \Citep{f20eb80456562e1}\colwiz{Boschi1997} has been proposed as a generalisation of Hardy's original paradox and was used for experimental tests of quantum non-locality \Citep{f20eb80456562e3}\colwiz{Barbieri2005}. Up to symmetries, there is one ladder paradox for any number of settings $k$. It can be presented neatly in tabular form (table \ref{ladtable}). 

\begin{table}
\caption{\label{ladtable} A ladder paradox. The $(2,2,2)$ ladder paradox is just the standard Hardy paradox.}
\begin{center}
\begin{singlespace}
\begin{tabular}{c|c|c|c|c|}
~ & ~ & ~ & ~ & ~ \\ \hline
~ & \begin{tabular}{cc}
1 & ~ \\ ~ & ~
\end{tabular}
&
\begin{tabular}{cc}
~ & 0 \\ ~ & ~
\end{tabular}
& ~ & ~ \\ \hline
~ &
\begin{tabular}{cc}
~ & ~ \\ 0 & ~
\end{tabular}
&
\begin{tabular}{cc}
* & ~ \\ ~ & ~
\end{tabular}
& $\ddots$ & ~ \\ \hline
~ &
~ & $\ddots$ & ~ &
\begin{tabular}{cc}
~ & 0 \\ ~ & ~
\end{tabular} \\ \hline
~ &
~ & ~ &
\begin{tabular}{cc}
~ & ~ \\ 0 & ~
\end{tabular}
&
\begin{tabular}{cc}
0 & ~ \\ ~ & ~
\end{tabular}
\\ \hline
\end{tabular}
\end{singlespace}
\end{center}
\end{table}

We will not explain here how the ladder paradox is in contradiction with local realism, as our proposition~\ref{2k2} makes it clear that the ladder paradox has to be subsumed by the original Hardy paradox in terms of its strength for proving non-locality. In fact, one can also see this directly: if the starred entry in table~\ref{ladtable} is a $0$, then the Hardy paradox occurs; if it is a $1$, then the ladder paradox for $k-1$ settings gets triggered by this $1$. Applying the argument recursively, we find that either the Hardy paradox occurs somewhere in the table, or the ladder paradox for $2$ settings occurs. Since the latter is again just the original Hardy paradox, we find that Hardy's paradox occurs in any case. Hence the occurence of a ladder paradox always implies the occurence of the original Hardy paradox. 

\begin{propn}\label{ladprop}
For $(2,k,2)$ scenarios, the occurrence of a ladder paradox implies the occurrence of a Hardy paradox.
\end{propn}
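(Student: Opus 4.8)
The plan is to argue by induction on the number of measurement settings $k$, exploiting the self-similar structure of the ladder-paradox table~\ref{ladtable}. In the base case $k=2$ there is nothing to prove: as the caption of table~\ref{ladtable} records, the two-setting ladder paradox coincides with the original Hardy paradox of table~\ref{hptable}, so its occurrence trivially witnesses a Hardy paradox.

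For the inductive step I would assume the claim for $k-1$ settings and suppose that a given empirical model displays the $k$-setting ladder paradox. Label Alice's settings $X_1,\dots,X_k$ and Bob's settings $Y_1,\dots,Y_k$ so that, in the notation of table~\ref{ladtable}, the top-left entry of the $(X_1,Y_1)$-box is a $1$, the off-diagonal boxes $(X_i,Y_{i+1})$ and $(X_{i+1},Y_i)$ carry the prescribed $0$s for $1\le i\le k-1$, and the top-left entry of the $(X_k,Y_k)$-box is a $0$. The argument then turns on the single entry that table~\ref{ladtable} leaves starred, namely the top-left entry $s\in\{0,1\}$ of the $(X_2,Y_2)$-box.

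If $s=0$, then the four entries formed by the initial $1$ in the $(X_1,Y_1)$-box, the two $0$s in the $(X_1,Y_2)$- and $(X_2,Y_1)$-boxes, and this $0$ in the $(X_2,Y_2)$-box are, after relabelling measurements and outcomes, exactly the entries of table~\ref{hptable}, so a Hardy paradox occurs on the settings $\{X_1,X_2\}\times\{Y_1,Y_2\}$ and we are done. If instead $s=1$, I would pass to the sub-model on Alice's settings $X_2,\dots,X_k$ and Bob's settings $Y_2,\dots,Y_k$; there the entry $s$ plays the role of the distinguished $1$ in the top-left box, and the remaining prescribed $0$s of the ladder (those indexed by $i\ge 2$, together with the terminal $0$ in the $(X_k,Y_k)$-box) are precisely the $0$s demanded by a $(k-1)$-setting ladder paradox, while the still-unspecified diagonal boxes remain free. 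Hence the sub-model exhibits the $(k-1)$-setting ladder paradox, and by the inductive hypothesis it --- and therefore the original model --- exhibits a Hardy paradox. This closes the induction.

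I do not anticipate a genuine obstacle; the work is entirely bookkeeping. The two points that warrant a moment's care are verifying that, when $s=0$, the four distinguished entries really do sit in the configuration of table~\ref{hptable} up to the allowed permutations, and that, when $s=1$, the sub-table on the final $k-1$ settings is honestly (isomorphic to) the $(k-1)$-setting ladder paradox and not some degenerate variant; both are immediate from the explicit shape of table~\ref{ladtable}. One could instead present the identical content as a single ``peeling'' recursion --- repeatedly discarding the first setting of each party until either a Hardy paradox is laid bare or one reaches the two-setting ladder, which is itself a Hardy paradox --- but the inductive phrasing keeps the case analysis cleanest.
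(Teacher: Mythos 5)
Your proof is correct and is essentially the argument the paper itself gives: the paper inspects the starred entry of table~\ref{ladtable} and peels off one setting at a time, which is exactly your induction read as a recursion (a phrasing you yourself note at the end). Both the $s=0$ case (the four distinguished entries forming table~\ref{hptable}) and the $s=1$ case (the residual $(k-1)$-setting ladder) check out against the explicit shape of table~\ref{ladtable}.
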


\section{Non-universality of Hardy's Paradox}
\label{secnonuniv}

%

The results of the previous section might raise the conjecture that the Hardy paradox could be universal in the same sense for any $(2,k,l)$ scenario. However, we have found that the equivalence of local realism to the absence of Hardy-type non-locality does not hold for $(2,k,l)$ scenarios in general: consider the probabilistic empirical model displayed in table~\ref{counterexa}~(b), for example. It concerns a Bell scenario with three two-outcome measurements for Alice; and one two-outcome and one three-outcome measurement for Bob. (This can easily be expanded to a probabilistic empirical model in the $(2,3,3)$ scenario, but we find the example easier to understand in the form of table~\ref{counterexa}.) By direct inspection, we find that no coarse-grained Hardy paradox occurs for this empirical model. Nevertheless, it displays possibilistic (and hence probabilistic) non-locality: the $1$ in the upper left corner of table~\ref{counterexa}~(a) cannot be completed to a determnistic grid.

In conclusion, the Hardy paradox and its coarse-grainings cannot account for all non-local realistic behaviour in scenarios with at least three settings and at least three outcomes. In general, the non-occurrence of a Hardy paradox is necessary but not sufficient for possibilistic local realism.

\begin{table}
\caption{\label{counterexa} (a) A non-locality proof without inequalities; (b) a probabilistic no-signalling model to which it applies although it displays no (coarse-grained) Hardy paradox.}
\begin{center}
\begin{singlespace}
\begin{tabular}{cc}
\begin{tabular}{c|cc|ccc|}
~ & ~ & ~ & ~ & ~ & ~ \\ \hline
~ & 1 & ~ & 0 & ~ & ~ \\
~ & ~ & ~ & ~ & ~ & ~ \\ \hline
~ & 0 & ~ & ~ & ~ & ~ \\
~ & ~ & ~ & ~ & 0 & ~ \\ \hline
~ & 0 & ~ & ~ & ~ & ~ \\
~ & ~ & ~ & ~ & ~ & 0 \\ \hline
\end{tabular}
\quad \quad & \quad \quad
\begin{tabular}{c|cc|ccc|}
~ & ~ & ~ & ~ & ~ & ~ \\ \hline
~ & $\frac{1}{16}$ & $\frac{3}{16}$ & 0 & $\frac{1}{8}$ & $\frac{1}{8}$ \\
~ & $\frac{3}{16}$ & $\frac{9}{16}$ & $\frac{1}{2}$ & $\frac{1}{8}$ & $\frac{1}{8}$ \\ \hline
~ & 0 & $\frac{1}{2}$ & $\frac{1}{8}$ & $\frac{1}{4}$ & $\frac{1}{8}$ \\
~ & $\frac{1}{4}$ & $\frac{1}{4}$ & $\frac{3}{8}$ & 0 & $\frac{1}{8}$ \\ \hline
~ & 0 & $\frac{1}{2}$ & $\frac{1}{8}$ & $\frac{1}{8}$ & $\frac{1}{4}$ \\
~ & $\frac{1}{4}$ & $\frac{1}{4}$ & $\frac{3}{8}$ & $\frac{1}{8}$ & 0 \\ \hline
\end{tabular}
\\ (a) \quad \quad & \quad \quad (b)
\end{tabular}
\end{singlespace}
\end{center}
\end{table}

\section{Discussion}

We have shown Hardy's non-locality paradox to be relatively universal in terms of non-local or non-realistic behaviour. It is the only non-locality proof without inequalities for $(2,2,l)$ and $(2,k,2)$ Bell scenarios, in the sense that it is a necessary and sufficient condition for possibilistic non-locality. We can even interpret the possibilistic versions of the no-signalling condition and the normalisation of probabilities as degenerate cases of the non-occurence of a coarse-grained Hardy paradox. Moreover, we have found that this universality does not extend to the $(2,3,3)$ possibilistic Bell scenario.

This raises the question of finding other possibilistic non-locality conditions that do not belong to the class of Hardy paradoxes for $k,l \geq 3$. It is pertinent to ask whether such non-locality (for example empirical models of the kind of table~\ref{counterexa}) can be realised in quantum mechanics. It has been demonstrated by one of the authors \Citep{f20eb9de1963362}\colwiz{Fritz2010} that some variants of Hardy-type non-locality that can be realised by Popescu-Rorhlich no-signalling boxes \Citep{f20eabee9adfbf5}\colwiz{Popescu1994} cannot be realised in quantum mechanics. It also remains to be seen how the ideas presented in this paper extend to scenarios with $n>2$, where a two-dimensional tabular representation can no longer be used. The GHZ argument concerns the $(3,2,2)$ scenario \citep{GHZ}; does this rely on an abstraction of Hardy type non-locality?

Finally, the results presented here tell us something about the computational complexity of recognising possiblistic non-locality in certain scenarios, where doing the latter is equivalent to deciding whether a (coarse-grained) Hardy paradox occurs. For $(2,k,2)$ scenarios, this simply amounts to checking all $2\times 2$ sub-tables for such a paradox, which gives an algorithm that is polynomial in the size of the input. For $(2,2,l)$ scenarios, one has to check whether each $1$ in the table can be completed to a deterministic grid; so, following the illustration in table~\ref{genequivtable}, it is to be checked whether there is some $1$ among the starred entries, which is equivalent to the non-occurence of the coarse-grained Hardy paradox. Again, this is clearly polynomial in the size of the input.

So for $(2,2,l)$ and $(2,k,2)$ scenarios, polynomial algorithms can be given. The general case with varying $k$ and $l$ remains an open problem. Our results of section~\ref{secnonuniv} and work by Zavodny \citep{Zavodny2010} lead us to conjecture that this general decidability problem for possibilistic local realist models is NP-hard; as is the case for probabilistic models \Citep{f20eb80456569f9}\colwiz{Pitowsky1991}. This gives reason to suspect that it may not be possible to obtain a classification of conditions that are necessary and sufficient for possibilistic local realism in full generality.

\begin{acknowledgements}
SM would like to thank Samson Abramsky and Rui Barbosa for valuable discussions. TF is supported by the EU STREP QCS.
\end{acknowledgements}

\bibliographystyle{plain}
\bibliography{hardybib}

\begin{thebibliography}{10}

\bibitem{f20ea19c93f51ad}
Samson Abramsky.
\newblock Relational hidden variables and non-locality.
\newblock {\em arXiv ePrints}, Jul 2010.

\bibitem{f20eb80456562e3}
M.~Barbieri, C.~Cinelli, F.~De~Martini, and P.~Mataloni.
\newblock Test of quantum nonlocality by full collection of polarization
  entangled photon pairs.
\newblock {\em The European Physical Journal D-Atomic, Molecular, Optical and
  Plasma Physics}, 32(2):261--267, 2005.

\bibitem{Bell1}
John~S. Bell.
\newblock {On the Einstein-Podolsky-Rosen paradox}.
\newblock {\em Physics}, 1:195--200, 1964.

\bibitem{Bell2}
John~S. Bell.
\newblock {\em Speakable and unspeakable in quantum mechanics}.
\newblock Cambridge University Press, Cambridge, 1987.
\newblock Collected papers on quantum philosophy.

\bibitem{f20eb80456562e1}
D.~Boschi, S.~Branca, F.~De~Martini, and L.~Hardy.
\newblock Ladder proof of nonlocality without inequalities: Theoretical and
  experimental results.
\newblock {\em Physical review letters}, 79(15):2755--2758, 1997.

\bibitem{f20eaa114ac3b88}
A.~Brandenburger and N.~Yanofsky.
\newblock A classification of hidden-variable properties.
\newblock {\em Journal of Physics A: Mathematical and Theoretical}, 41:425302,
  2008.

\bibitem{genHardy2}
Jos\'{e}~L. Cereceda.
\newblock Hardy's nonlocality for generalized n-particle {GHZ} states.
\newblock {\em Physics Letters A}, 327(5-6):433 -- 437, 2004.

\bibitem{genHardy1}
S.~K. {Choudhary}, S.~{Ghosh}, G.~{Kar}, S.~{Kunkri}, R.~{Rahaman}, and
  A.~{Roy}.
\newblock {Hardy's non-locality and generalized non-local theory}.
\newblock {\em ArXiv ePrints}, July 2008.

\bibitem{f20eb9de1963362}
Tobias Fritz.
\newblock Quantum analogues of {H}ardy's nonlocality paradox.
\newblock {\em arXiv ePrints}, Jun 2010.

\bibitem{f20eabee9ae016a}
D.~M. Greenberger, M.~A. Horne, A.~Shimony, and A.~Zeilinger.
\newblock Bell's theorem without inequalities.
\newblock {\em American Journal of Physics}, 58:1131, 1990.

\bibitem{GHZ}
Daniel~M. Greenberger, Michael~A. Horne, and Anton Zeilinger.
\newblock Going {B}eyond {B}ell's {T}heorem.
\newblock In M.~Kafatos, editor, {\em Bell's {T}heorem, {Q}uantum {T}heory, and
  {C}onceptions of the {U}niverse}, pages 69--72. Kluwer, 1989.

\bibitem{f20eaa114ac38fa}
L.~Hardy.
\newblock Nonlocality for two particles without inequalities for almost all
  entangled states.
\newblock {\em Physical Review Letters}, 71(11):1665--1668, 1993.

\bibitem{f20eb80456569f9}
I.~Pitowsky.
\newblock Correlation polytopes: their geometry and complexity.
\newblock {\em Mathematical Programming}, 50(1):395--414, 1991.

\bibitem{f20eabee9adfbf5}
S.~Popescu and D.~Rohrlich.
\newblock Quantum nonlocality as an axiom.
\newblock {\em Foundations of Physics}, 24(3):379--385, 1994.

\bibitem{Zavodny2010}
Jakub Zavodny.
\newblock In preparation.

\end{thebibliography}

\end{document}